\newcommand{\lyxaddress}[1]{
	\par {\raggedright #1
	\vspace{1.4em}
	\noindent\par}
}
\theoremstyle{plain}
\newtheorem{thm}{\protect\theoremname}
\let\originalleft\left
\let\originalright\right
\renewcommand{\left}{\mathopen{}\mathclose\bgroup\originalleft}
\renewcommand{\right}{\aftergroup\egroup\originalright}
\DeclareMathOperator{\claw}{\textsc{Claw}}
\DeclareMathOperator{\ed}{\textsc{Element-Distinctness}}
\DeclareMathOperator{\psearch}{\textsc{pSearch}}
\DeclareMathOperator{\E}{\mathbb{E}}
\newcommand{\cl}[2]{\claw_{#1\rightarrow #2}}
\date{}
\providecommand{\theoremname}{Theorem}
\begin{document}
\title{A Note About Claw Function With a Small Range\thanks{Supported by the project ``Quantum algorithms: from complexity theory
to experiment'' funded under ERDF programme 1.1.1.5.}}
\author{Andris Ambainis$^{1}$\and Kaspars Balodis$^{1}$\and J\={a}nis
Iraids$^{1}$}
\maketitle

\lyxaddress{$^{1}$ Center for Quantum Computer Science, Faculty of Computing,
University of Latvia}
\begin{abstract}
In the claw detection problem we are given two functions $f:D\rightarrow R$
and $g:D\rightarrow R$ ($|D|=n$, $|R|=k$), and we have to determine
if there is exist $x,y\in D$ such that $f(x)=g(y)$. We show that
the quantum query complexity of this problem is between $\Omega\left(n^{1/2}k^{1/6}\right)$
and $O\left(n^{1/2+\varepsilon}k^{1/4}\right)$ when $2\leq k<n$.
\end{abstract}

\section{Introduction}

In this note we study the $\claw$ problem in which given two discrete
functions $f:D\rightarrow R$ and $g:D\rightarrow R$ ($|D|=n$, $|R|=k$)
we have to determine if there is a collision, i.e., inputs $x,y\in D$
such that $f(x)=g(y)$. In contrast to the $\ed$ problem, where the
input is a single function $f:D\rightarrow R$ and we have to determine
if $f$ is injective, $\claw$ is non-trivial even when $k<n$. This
is the setting we focus on.

Both $\claw$ and $\ed$ have wide applications as useful subroutines
in more complex algorithms \cite{BJLM13,GS20} and as a means of lower
bounding complexity \cite{CKK12,ACL+20}.

$\claw$ and $\ed$ were first tackled by Buhrman et al. in 2000 \cite{BDH+05}
where they gave an $O\left(n^{3/4}\right)$ algorithm and $\Omega\left(n^{1/2}\right)$
lower bound. In 2003 Ambainis, introducing a novel technique of quantum
walks, improved the upper bound to $O\left(n^{2/3}\right)$ in the
query model \cite{ambainis2007quantum}. It was soon realized that
a similar approach works for $\claw$ \cite{CE05,MSS07,Tan09}. Meanwhile
Aaronson and Shi showed a lower bound $\Omega\left(n^{2/3}\right)$
that holds if the range $k=\Omega\left(n^{2}\right)$ \cite{aaronson2004quantum}.
Eventually Ambainis showed that the $\Omega\left(n^{2/3}\right)$
bound holds even if $k=n$ \cite{ambainis2005polynomial}. The same
lower bound has since been reproved using the adversary method \cite{Ros14}.
Until now, only the $\Omega\left(n^{1/2}\right)$ bound based on reduction
of searching was known for $\claw$ with $k=o\left(n\right)$ \cite{BDH+05}.

We consider quantum query complexity of $\claw$ where the input functions
are given as a list of their values in black box. Let $Q\left(f\right)$
denote the bounded error quantum query complexity of $f$. For a short
overview of black box model refer to Buhrman and de Wolf's survey
\cite{BdW02}. Let $[n]$ denote $\left\{ 1,2,\dots,n\right\} $.
Let $\cl{n}{k}:\left[k\right]^{2n}\rightarrow\left\{ 0,1\right\} $
be defined as

\[
\cl{n}{k}\left(x_{1},\dots,x_{n},y_{1},\dots,y_{n}\right)=\begin{cases}
1, & \text{if \ensuremath{\exists i,j\,x_{i}=y_{j}}}\\
0, & \text{otherwise}
\end{cases}.
\]

Our contribution is a quantum algorithm for $\cl{n}{k}$ showing $Q\left(\cl{n}{k}\right)=O\left(n^{1/2+\varepsilon}k^{1/4}\right)$
and a lower bound $Q\left(\cl{n}{k}\right)=\Omega\left(n^{1/2}k^{1/6}\right)$.
In section \ref{sec:Results} we describe the algorithm, and in section
\ref{sec:Lower-Bound} we give the lower bound.

\section{Results\label{sec:Results}}
\begin{thm}
\label{thm:alg}For all \textup{$\varepsilon>0$}, we have $Q\left(\cl{n}{k}\right)=O\left(n^{1/2+\varepsilon}k^{1/4}\right).$
\end{thm}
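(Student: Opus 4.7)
My plan is to give a quantum algorithm for $\cl{n}{k}$ that combines classical sampling with Grover's search and applies several layers of amplitude amplification, balancing parameters against the range bound $k$.

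The basic building block is a subroutine $A_{s}$, parameterised by a sample size $s$: classically query $f$ on $s$ uniformly random indices of $[n]$, producing a set of values $V_{f}=f(S_{f})\subseteq[k]$ of size at most $s$, and then apply Grover's search on $[n]$, querying $g$, to look for a $j$ with $g(j)\in V_{f}$. A single invocation costs $s+O(\sqrt{n/m})$ queries, where $m=|g^{-1}(V_{f})|$, and $A_{s}$ succeeds exactly when the claw value $v^{*}$ lies in $V_{f}$ and the inner Grover finds its $g$-preimage. By symmetry there is also a mirror subroutine $\tilde{A}_{s}$ that samples $g$ and Grover-searches $f$.

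Wrapping $A_{s}$ (or $\tilde{A}_{s}$) in amplitude amplification over the random sample and choosing $s\approx k^{1/2}$, a direct analysis shows that the total cost is roughly $n^{3/4}/(m_{v^{*}}^{1/2}b_{v^{*}}^{1/4})$, where $m_{v^{*}}=|f^{-1}(v^{*})|$ and $b_{v^{*}}=|g^{-1}(v^{*})|$. Together the two routines already give an $O(\sqrt{n}\,k^{1/4})$ bound in the regime where at least one of $m_{v^{*}},b_{v^{*}}$ is $\Omega(n/k)$, so the range constraint is exploited through the saturation $|V_{f}|\leq k$.

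The main obstacle is the adversarial regime $m_{v^{*}}=b_{v^{*}}=1$, in which the naive analysis of $A_{s}$ only yields $O(n^{3/4})$. To handle it I would nest the amplification through $O(1/\varepsilon)$ additional levels, each re-sampling at a different geometric scale $s\in\{1,n^{\varepsilon},n^{2\varepsilon},\dots\}$ and accumulating the amplitudes of the individual layers. The composition introduces the $n^{\varepsilon}$ overhead but removes the dependence on the unknown preimage counts, producing the claimed $O(n^{1/2+\varepsilon}k^{1/4})$ bound; for $k>n^{2/3}$, where this exceeds Ambainis's $O(n^{2/3})$, the latter serves as a fallback. I expect the technical core of the proof to be the amplitude analysis of this nested construction, and in particular the verification that the correct sample scale can be chosen without instance-dependent knowledge of $m_{v^{*}}$ and $b_{v^{*}}$.
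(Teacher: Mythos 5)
There is a genuine gap, and it sits exactly where the dependence on $k$ has to come from. Your only uses of the range bound are the saturation $|V_{f}|\leq k$ and the choice $s\approx k^{1/2}$, and neither helps on the instances that actually determine the complexity: a unique claw $f(i^{*})=g(j^{*})=v^{*}$ with $m_{v^{*}}=b_{v^{*}}=1$, the remaining positions split among a few high-multiplicity decoy values and a few more singletons. On such an instance every layer $A_{s}$ has success probability $\Theta(s/n)$ (the sample must contain $i^{*}$) and per-invocation cost $s+\Theta(\sqrt{n})$ (the inner Grover searches for a unique marked index), so after amplitude amplification each scale costs $\sqrt{n/s}\cdot\left(s+\sqrt{n}\right)\geq n^{3/4}$ for \emph{every} $s$. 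Re-running at $O(1/\varepsilon)$ geometric scales, or accumulating their amplitudes, cannot break this floor, because each layer searches the same full-size instance for the same unique pair. The nested construction therefore gives $O(n^{3/4})$, which is strictly worse than $O(n^{1/2+\varepsilon}k^{1/4})$ whenever $k=o(n)$. The assertion that nesting ``removes the dependence on the unknown preimage counts'' is where the argument breaks: the difficulty is not that the counts are unknown, it is that when they equal $1$ the subroutine is genuinely expensive.

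The idea you are missing is the one the paper's proof is built around: after querying a random sample of $\tilde{O}(n^{\alpha})$ positions of $f$, with high probability \emph{every} value not seen in the sample has multiplicity less than $n^{1-\alpha}$, and since there are only $k$ values in the range, the number of positions carrying unseen values is at most $b\leq k\,n^{1-\alpha}$. This converts the residue into a genuinely \emph{smaller} claw instance on $\tilde{O}(b)$ positions, to which the algorithm gets oracle access at cost $\tilde{O}(\sqrt{n/b})$ per query via Grover's minimum search over a pseudorandom permutation. Solving the reduced instance with the $O(b^{2/3})$ algorithm and optimizing $\alpha$ already beats $n^{3/4}$ for $k=n^{\varkappa}$ with $\varkappa<2/3$, and recursing --- calling the new algorithm itself on the reduced instance, $O(1/\varepsilon)$ levels deep --- drives the exponent to $\frac{1}{2}+\frac{\varkappa}{4}+\varepsilon$. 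Your sampling step and your Grover step both appear in the paper, but as preprocessing for this size reduction and as the simulated oracle, not as the search primitive itself; without the bound $b\leq k\,n^{1-\alpha}$ and the recursion on the reduced instance, the target bound is out of reach.
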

\begin{proof}
Let $X=\left(x_{1},\dots,x_{n}\right)$, $Y=\left(y_{1},\dots,y_{n}\right)$
be the inputs of the function. We denote $k=n^{\varkappa}$.

Consider the following algorithm parametrized by $\alpha\in\left[0,1\right]$.
\begin{enumerate}
\item Select a random sample $A=\left\{ a_{1},\dots,a_{\ell}\right\} \subseteq\left[n\right]$
of size $\ell=4\cdot n^{\alpha}\cdot\ln n$ and query the variables
$x_{a_{1}},\dots,x_{a_{\ell}}$.\\
Denote by $X_{A}=\left\{ x_{a}\mid a\in A\right\} $ the set containing
their values. Do a Grover search for an element $y\in Y$ such that
$y\in X_{A}$. If found, output 1.
\item[1'] Select a random sample $A'=\left\{ a'_{1},\dots,a'_{\ell}\right\} \subseteq Y$
of size $\ell$ and query the variables $y_{a'_{1}},\dots,y_{a'_{\ell}}$.\\
Denote by $Y_{A'}=\left\{ y_{a'}\mid a'\in A'\right\} $ the set containing
their values. Do a Grover search for an element $x\in X$ such that
$x\in Y_{A'}$. If found, output 1.
\item \label{enu:recstep}Run $\cl{4b\ln n}{k}$ algorithm (with the value
of $b$ specified below) with the following oracle:
\begin{enumerate}
\item To get $x_{i}$: do a pseudorandom permutation on $x_{1},\dots,x_{n}$
using seed $i$ and using Grover's minimum search return the first
value $x_{j}$ such that $x_{j}\notin X_{A}$.
\item To get $y_{i}$: do a pseudorandom permutation on $y_{1},\dots,y_{n}$
using seed $i$ and using Grover's minimum search return the first
value $y_{j}$ such that $y_{j}\notin X_{A'}$.
\end{enumerate}
\end{enumerate}
Let $B=\left\{ i\in\left[n\right]\mid x_{i}\notin X_{A}\right\} $,
$B'=\left\{ i\in\left[n\right]\mid y_{i}\notin Y_{A'}\right\} $ be
the sets containing the indices of the variables which have values
not seen in the steps 1 and 1'. We denote $\left|B\right|=b=n^{\beta}$.

Let us calculate the probability that after step 1 there exists an
unseen value $v$ which is represented in at least $n^{1-\alpha}$
variables, i.e., $v\notin X_{A}\wedge\left|\left\{ i\in\left[n\right]\mid x_{i}=v\right\} \right|\geq n^{1-\alpha}$.
Consider an arbitrary value $v^{*}\in\left[k\right]$ such that $\left|\left\{ i\mid x_{i}=v^{*}\right\} \right|\geq n^{1-\alpha}$.
For $i\in\left[\ell\right]$, let $Z_{i}$ be the event that $x_{a_{i}}=v^{*}$.
$\forall i\in\left[\ell\right]\ \Pr\left[Z_{i}\right]\geq\frac{n^{1-\alpha}}{n}$.
Let $Z=\sum_{i\in\left[\ell\right]}Z_{i}$. Then $\E\left[Z\right]=\ell\cdot\E\left[Z_{1}\right]\geq4\cdot n^{\alpha}\cdot\ln n\cdot\frac{n^{1-\alpha}}{n}=4\ln n$.
Using Chernoff inequality (see e.g. \cite{chung2006concentration}),
\[
\Pr\left[Z=0\right]\leq\exp\left(-\frac{1}{2}\E\left[Z\right]\right)\leq\exp\left(-2\ln n\right)=\frac{1}{n^{2}}.
\]
The probability that there exists such $v^{*}\in\left[k\right]$ is
at most $\frac{n^{\varkappa}}{n^{2}}=o\left(1\right)$. Therefore,
with probability $1-o\left(1\right)$ after step $1$, every value
$v\in B$ is represented in the input less than $n^{1-\alpha}$ times.
The same reasoning can be applied to step $1'$ and the set $B'$.
Therefore, with probability $1-o\left(1\right)$ both $b$ and $b'$
are at most $k\cdot n^{1-\alpha}=n^{\varkappa+1-\alpha}$.

Similarly, we show that with probability $1-o\left(1\right)$ each
$x\in B$ appears as the first element from $B$ in at least one of
the permutations of the oracle in step 2. Let $W_{i}^{x}$ be the
event that $x\in B$ appears in the $i$-th permutation as the first
element from $B$. $\E\left[W_{i}^{x}\right]=\frac{1}{b}$. Let $W^{x}=\sum_{i\in\left[4b\ln n\right]}W_{i}^{x}$.
$\E\left[W^{x}\right]=4b\ln n\cdot\frac{1}{b}=4\ln n$. $\Pr\left[W^{x}=0\right]\leq\exp\left(-2\ln n\right)=\frac{1}{n^{2}}$.
$\Pr\left[\exists x\in B:Z^{x}=0\right]\leq\frac{n}{n^{2}}=\frac{1}{n}=o\left(1\right)$.
The same argument works for $B'$. Therefore, if there is a collision,
it will be found by the algorithm with probability $1-o\left(1\right)$.

We also show that with probability $1-o\left(1\right)$, in all permutations
the first element from $B$ appears no further than in position $4\frac{n}{b}\ln n$
(and similarly for $B'$). We denote by $P_{i,j}$ the event that
in the $i$-th permutation in the $j$-th position is an element from
$B$. $\E\left[P_{i,j}\right]=\frac{b}{n}$. We denote $P_{i}=\sum_{j\in\left[4\cdot\frac{n}{b}\cdot\ln n\right]}P_{i,j}$.
$\E\left[P_{i}\right]=4\cdot\ln n$. $\Pr\left[P_{i}=0\right]\leq\exp\left(-2\ln n\right)=\frac{1}{n^{2}}$.
$\Pr\left[\exists i\in\left[4b\ln n\right]:P_{i}=0\right]\leq\frac{4b\ln n}{n^{2}}\leq\frac{4n\ln n}{n^{2}}=o\left(1\right)$.
Therefore, the Grover's minimum search will use at most $\tilde{O}\left(\sqrt{\frac{n}{n^{\beta}}}\right)$
queries.

The steps 1 and 1' use $\tilde{O}\left(n^{\alpha}\right)$ queries
to obtain the random sample, and $O\left(\sqrt{n}\right)$ queries
to check if there is a colliding element on the other side of the
input. The oracle in step 2 uses $\tilde{O}\left(\sqrt{\frac{n}{n^{\beta}}}\right)$
queries to obtain one value of $x_{i}$ or $y_{i}$.

Therefore the total complexity of the algorithm is 
\[
\tilde{O}\left(n^{\alpha}+n^{\frac{1}{2}}+Q\left(\cl{4b\ln n}{k}\right)\cdot n^{\frac{1}{2}-\frac{1}{2}\beta}\right).
\]

By using the $O\left(n^{2/3}\right)$ algorithm in step 2, 
\begin{align*}
Q\left(\cl{4b\ln n}{k}\right)\cdot n^{\frac{1}{2}-\frac{1}{2}\beta} & =n^{\frac{2}{3}\beta+\frac{1}{2}-\frac{1}{2}\beta}\\
 & =n^{\frac{1}{2}+\frac{1}{6}\beta}\\
 & \leq n^{\frac{1}{2}+\frac{1}{6}\left(\varkappa+1-\alpha\right)}\\
 & =n^{\frac{4+\varkappa-\alpha}{6}},
\end{align*}

and the total complexity is minimized by setting $\alpha=\frac{4+\varkappa}{7}$.
However, we can do better than that. Notice that the $O\left(n^{2/3}\right)$
algorithm might not be the best choice for solving $\cl{4b\ln n}{k}$
in step 2.

Let $\mathcal{A}_{0}$ denote the regular $O\left(n^{\nicefrac{2}{3}}\right)$
$\cl{n}{k}$ algorithm. For $i>0$, let $\mathcal{A}_{i}$ denote
a version of algorithm from Theorem \ref{thm:alg} that in step \ref{enu:recstep}
calls $\mathcal{A}_{i-1}$. Then we show that for all $n$ and all
$0\leq\varkappa\leq\frac{2}{3}$, 
\[
Q\left(\mathcal{A}_{i}\right)=\tilde{O}\left(n^{T_{i}(\varkappa)}\right),
\]
 where $T_{i}(\varkappa)=\frac{\left(2^{i}-1\right)\varkappa+2^{i+1}}{2^{i+2}-1}$.

The proof is by induction on $i$. For $i=0$, we trivially have that
$Q\left(\mathcal{A}_{0}\right)=\tilde{O}\left(n^{\nicefrac{2}{3}}\right)$.
For the inductive step, consider the analysis of our algorithm. Let
us set $\alpha=T_{i}\left(\varkappa\right)$. First, notice that $T_{i}\text{\ensuremath{\left(\varkappa\right)}}$
is non-decreasing in $\varkappa$ and $T_{i}\left(\frac{2}{3}\right)=\frac{2}{3}$
for all $i$. Thus for all $\varkappa\leq\frac{2}{3}$, we have $T_{i}\left(\varkappa\right)\leq\frac{2}{3}$,
hence $\alpha\leq\frac{2}{3}$ and $\frac{\varkappa}{1-\alpha+\varkappa}\leq\frac{2}{3}$.
Second, since the coefficient of $\varkappa$ is $\frac{2^{i}-1}{2^{i+2}-1}\leq1$
the function $T_{i}\left(\varkappa\right)$ is above $\varkappa$
for $\varkappa\leq\frac{2}{3}$, establishing $\alpha-\varkappa\geq0$.
This confirms that $\alpha=T_{i}\left(\varkappa\right)$ is a valid
choice of $\alpha$.

It remains to show that the complexity of step \ref{enu:recstep}
does not exceed $T_{i}\left(\varkappa\right)$. By the inductive assumption
and analysis of the algorithm, the complexity (up to logarithmic factors)
of the second step is $n$ to the power of $\left(1-\alpha+\varkappa\right)\cdot T_{i-1}\left(\frac{\varkappa}{1-\alpha+\varkappa}\right)+\frac{\alpha-\varkappa}{2}$.
Finally, we have to show that 
\[
\left(1-T_{i}\left(\varkappa\right)+\varkappa\right)\cdot T_{i-1}\left(\frac{\varkappa}{1-T_{i}\left(\varkappa\right)+\varkappa}\right)+\frac{T_{i}\left(\varkappa\right)-\varkappa}{2}\leq T_{i}\left(\varkappa\right).
\]
By expanding $T_{i-1}\left(\varkappa\right)$ and with a slight rearrangement,
we obtain 
\[
\frac{(2^{i-1}-1)\varkappa+2^{i}\left(1-T_{i}\left(\varkappa\right)+\varkappa\right)}{2^{i+1}-1}\leq\frac{T_{i}\left(\varkappa\right)+\varkappa}{2}.
\]
We can further rearrange the required inequality by bringing $T_{i}\left(\varkappa\right)$
to right hand side and everything else to the other. Then we get
\[
\frac{(2^{i-1}-1+2^{i}-\frac{2^{i+1}-1}{2})\varkappa+2^{i}}{2^{i+1}-1}\leq T_{i}\left(\varkappa\right)\left(\frac{1}{2}+\frac{2^{i}}{2^{i+1}-1}\right).
\]
After simplification we obtain $\frac{\left(2^{i}-1\right)\varkappa+2^{i+1}}{2^{i+2}-1}\leq T_{i}(\varkappa)$,
which is true. 

Since $\lim_{i\rightarrow\infty}\frac{2^{i}-1}{2^{i+2}-1}=\frac{1}{4}$
and $\lim_{i\rightarrow\infty}\frac{2^{i+1}}{2^{i+2}-1}=\frac{1}{2}$,
the result follows.
\end{proof}

\section{Lower Bound\label{sec:Lower-Bound}}

We show a $\Omega\left(n^{1/2}k^{1/6}\right)$ quantum query complexity
lower bound for $\cl{n}{k}$.
\begin{thm}
For all $k\geq2$, we have $Q\left(\cl{n}{k}\right)=\Omega\left(n^{1/2}k^{1/6}\right)$.
\end{thm}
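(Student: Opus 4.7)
The target $n^{1/2}k^{1/6} = \sqrt{n/k}\cdot k^{2/3}$ factors suggestively, so the plan is to compose the $\Omega(\sqrt{n/k})$ quantum search lower bound with Ambainis's $\Omega(k^{2/3})$ lower bound for $\ed$ on $k$ variables with range $[k]$. As a preliminary step, a standard random bipartition of positions converts the $\ed_k$ bound into $Q(\cl{k-2}{k-2}) = \Omega(k^{2/3})$ over the alphabet $\{3,\ldots,k\}$ (with a constant success-probability loss that does not affect the asymptotics).

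To embed this hard inner problem into $\cl{n}{k}$, I would reserve two sentinel values $1, 2 \in [k]$ and, for uniformly random subsets $P_X, P_Y \subseteq [n]$ of size $k-2$, place a hard $\cl{k-2}{k-2}$ instance (with values in $\{3,\ldots,k\}$) at positions $P_X$ on the $X$-side and $P_Y$ on the $Y$-side. The remaining $X$-positions are filled with the constant $1$ and the remaining $Y$-positions with $2$. Since sentinel $1$ appears only on the $X$-side, sentinel $2$ only on the $Y$-side, and the inner alphabet is disjoint from $\{1,2\}$, no spurious cross-block claw can arise: the full $\cl{n}{k}$ instance outputs $1$ iff the planted $\cl{k-2}{k-2}$ does. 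Because $P_X, P_Y$ are hidden, an algorithm must in effect first locate ``useful'' positions---only a $(k-2)/n$ fraction of all positions are useful---before each useful inner query, and by the Grover lower bound locating one costs $\Omega(\sqrt{n/k})$, while the inner problem needs $\Omega(k^{2/3})$ useful queries.

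The main obstacle is making this two-level composition rigorous: the sentinel/inner distinction is not a clean oracle separation, so a direct appeal to the black-box composition theorem of the adversary method needs care. My approach is to introduce the auxiliary $\psearch$ problem which cleanly exposes the two-level structure---an outer OR-type search over $n$ positions with $k-2$ marked items, composed with the inner $\cl{k-2}{k-2}$ oracle---and to prove $Q(\psearch) = \Omega(\sqrt{n/k}\cdot k^{2/3})$ via the standard adversary composition bound (applied to $\mathrm{OR}_n\circ \cl{k-2}{k-2}$ with the promise that exactly $k-2$ outer coordinates are marked). It then suffices to show that any $\cl{n}{k}$ algorithm can simulate a $\psearch$ oracle call with $O(1)$ queries to the padded embedding, so that $Q(\cl{n}{k}) = \Omega(Q(\psearch)) = \Omega(n^{1/2}k^{1/6})$. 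I expect the composition theorem with the promise on the number of marked items to be the most delicate part, possibly requiring the general (negative-weight) adversary rather than the positive-weight version.
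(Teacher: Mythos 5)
Your overall plan---factor the target as $\sqrt{n/k}\cdot k^{2/3}$, pad with two one-sided sentinel values so no spurious cross claws arise, and invoke a composition theorem---is exactly the paper's strategy, and your sentinel construction is essentially the paper's reduction. But the composition you actually write down, $\mathrm{OR}_n\circ\cl{k-2}{k-2}$ with exactly $k-2$ marked outer coordinates, is the wrong object: that function is an OR of $n$ \emph{independent} claw instances on $2n(k-2)$ variables, whereas your embedding plants a \emph{single} claw instance whose $2(k-2)$ relevant values sit at hidden positions among $2n$ variables. The two problems do not coincide, and your embedding gives no way to simulate an oracle for the former with $O(1)$ queries. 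The composition has to go the other way: claw is the \emph{outer} function and search is the \emph{inner} one. Concretely, the paper defines $\psearch_m$ as the partial function on $m$ symbols that returns the value of the unique non-$*$ entry, and lower-bounds $\cl{k}{k}\circ\psearch_{\left\lfloor n/k\right\rfloor}$: each of the $2k$ outer claw values is hidden inside its own block of $\left\lfloor n/k\right\rfloor$ positions. This also repairs a second gap in your sketch: with the useful positions drawn as a uniformly random $(k-2)$-subset of $[n]$, the instance is not a composed function on disjoint, known blocks of variables, so no black-box composition theorem applies directly; partitioning the positions into blocks with exactly one useful position per block is what makes the composition literal.

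Once the composition is set up this way, the delicate step you anticipate (adversary composition under a promise, possibly needing the negative-weight adversary) is available off the shelf: Brassard et al.\ \cite[Theorem 13]{BHK+19} prove $Q\left(h\circ\psearch_m\right)=\Omega\left(Q(h)\cdot\sqrt{m}\right)$ for this partial search function, and Zhang \cite{Zha05} gives $Q\left(\cl{k}{k}\right)=\Omega\left(k^{2/3}\right)$ directly, so the random-bipartition reduction from $\ed$ is unnecessary (though harmless). Combining these with the sentinel reduction yields $Q\left(\cl{n}{k}\right)\geq Q\left(\cl{k-2}{k-2}\circ\psearch_{\left\lfloor n/(k-2)\right\rfloor}\right)=\Omega\left(k^{2/3}\sqrt{n/k}\right)=\Omega\left(n^{1/2}k^{1/6}\right)$, as claimed.
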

\begin{proof}
Let $\psearch_{m}:\left({*}\cup[k]\right)^{m}\rightarrow[k]$ be the
partial function defined as 
\[
\psearch_{m}\left(x_{1},x_{2},\ldots,x_{m}\right)=\begin{cases}
x_{i}, & \text{if }x_{i}\neq*,\forall j\neq i:x_{j}=*\\
\text{undefined}, & \text{otherwise}
\end{cases}.
\]
Consider the function $f_{n,k}=\cl{k}{k}\circ\psearch_{\left\lfloor n/k\right\rfloor }$.
One can straightforwardly reduce $f_{n,k}(x,y)$ to $\cl{n}{k+2}(x',y')$
by setting 
\[
x'_{i}=\begin{cases}
x_{i}, & \text{if }x_{i}\neq*\\
k+1, & \text{if }x_{i}=*
\end{cases}
\]
 and 
\[
y'_{i}=\begin{cases}
y_{i}, & \text{if }y_{i}\neq*\\
k+2, & \text{if }y_{i}=*
\end{cases}.
\]

Next, we show that $Q\left(f_{n,k}\right)=\Omega\left(k^{2/3}\sqrt{n/k}\right)=\Omega\left(n^{1/2}k^{1/6}\right)$.
The fact that $Q\left(\cl{k}{k}\right)=\Omega\left(k^{2/3}\right)$
has been established by Zhang \cite{Zha05}. Furthermore, thanks to
the work done by Brassard et al. in \cite[Theorem 13]{BHK+19} we
know that for $\psearch_{m}$ a composition theorem holds: $Q\left(h\circ\psearch_{m}\right)=\Omega\left(Q\left(h\right)\cdot Q\left(\psearch_{m}\right)\right)=\Omega(Q(h)\cdot\sqrt{m})$.
Therefore, 
\[
Q\left(\cl{n}{k}\right)\geq Q\left(\cl{k-2}{k-2}\circ\psearch_{\left\lfloor \frac{n}{k-2}\right\rfloor }\right)=\Omega\left(k^{2/3}\sqrt{\frac{n}{k}}\right)=\Omega\left(n^{1/2}k^{1/6}\right).
\]
\end{proof}

\section{Open Problems}

Can we show that $Q\left(\cl{n}{n^{2/3}}\right)=\Omega\left(n^{\nicefrac{2}{3}}\right)$?
In particular, our algorithm struggles with instances where there
are $\frac{n^{\nicefrac{2}{3}}}{2}$ singletons only two (or none)
of which are matching and the remaining variables are evenly distributed
with $\Theta\left(n^{\nicefrac{1}{3}}\right)$ copies each, such that
none are matching. Thus our algorithm then either has to waste time
sampling all the high-frequency decoy values or have most variables
not sampled by step \ref{enu:recstep}. If this lower bound held,
it would imply a better lower bound for evaluating constant depth
formulas and Boolean matrix product verification \cite[Theorem 5]{CKK12}.

\bibliographystyle{plain}
\bibliography{claw-ref}

\end{document}